\newtheorem{theorem}{Theorem}
\crefname{lemma}{lemma}{lemmas}
\newtheorem{lemma}{Lemma}
\newtheorem{corollary}{Corollary}
\newtheorem{definition}{Definition}
\let\originalleft\left
  \let\originalright\right
\renewcommand{\left}{\mathopen{}\mathclose\bgroup\originalleft}
  \renewcommand{\right}{\aftergroup\egroup\originalright}
\newcommand{\pj}{p_{j}}
\newcommand{\mco}{\mathcal{O}}
\newcommand{\oj}{\mco_j}
\newcommand{\de}{d_{\mathrm{eff}}}
\newcommand{\id}{\mathbbm{1}}
\newcommand{\expo}[1]{\operatorname{e}^{#1}}
\newcommand{\tr}[2][ ]{\operatorname{Tr}#1\left[ {#2} \right]} 
\newcommand{\braket}[1]{\vphantom{\left(#1\right)^A} \left \langle #1 \right \rangle }
\newcommand{\ket}[1]{\left|#1 \right \rangle \vphantom{\left( #1 \right)^A}}
\newcommand{\bra}[1]{\left\langle #1 \right | \vphantom{\left(#1\right)^A}}
\newcommand{\dif}{\;\mathrm{d}}
\newcommand{\Real}{\mathbb{R}}
\newcommand{\Nat}{\mathbb{N}}
\newcommand{\hil}{\mathcal{H}}
\newcommand{\rank}[1]{\relax\ifmmode\operatorname{rank}#1\else rank-$#1$\fi}
\newcommand{\Set}[2]{\left\lbrace #1 \,\middle|\, #2 \right \rbrace}
\newcommand{\nth}[1][n]{$#1$\textsuperscript{th}{} }
\newcommand{\mbm}{\mathbb{M}}
\newcommand{\mcm}{\mathcal{M}}
\newcommand{\mcs}{\mathcal{S}}
\newcommand{\mcp}{\mathcal{P}}
\begin{document}

\title{Comparing classical and quantum equilibration}

\author{Artur S.L. Malabarba}
\affiliation{H.H. Wills Physics Laboratory, University of Bristol, Tyndall Avenue, Bristol, BS8 1TL, U.K.}
\author{Terry Farrelly}
\affiliation{Institut f{\"u}r Theoretische Physik, Leibniz Universit{\"a}t, Appelstra{\ss}e 2, 30167 Hannover, Germany}
\author{Anthony J. Short}
\affiliation{H.H. Wills Physics Laboratory, University of Bristol, Tyndall Avenue, Bristol, BS8 1TL, U.K.}
\date{\today}

\date{\today}

\begin{abstract}
    By using a physically-relevant and theory independent definition of measurement-based equilibration, we show quantitatively that equilibration is easier for quantum systems than for classical systems, in the situation where the initial state of the system is completely known (pure state).
    This shows that quantum equilibration is a fundamental, nigh unavoidable, aspect of physical systems, while classical equilibration relies on experimental ignorance.
    When the state is not completely known, a mixed state, this framework also shows quantum equilibration requires weaker conditions.
\end{abstract}
\maketitle

For over a century, physicists have sought to understand the emergence of apparently irreversible equilibration from reversible microscopic dynamics.
Results over the last few years have shown equilibration for general quantum systems under very weak assumptions\cite{Lin09, Reimann10, Short11, ShortFarrelly11, Goldstein13, Malabarba14, EFG15, GHRdRS16, GE16}.
In classical mechanics, results on equilibration rely on assumptions such as ergodicity, mixing dynamics, mixed initial states, and coarse-graining of the measurements\cite{ReiEvs13, farquhar1964ergodic, gemmer2009quantum, sklar1995physics}.

Recently, Reimann and Evstigneev\cite{ReiEvs13} have analysed equilibration in classical and quantum theory by evaluating observable expectation values, and comparing them to a parameter representing experimental precision.
Then, considering mixed initial states, they are able to compare the conditions necessary for equilibration under each case, showing that they require a very different set of conditions.

Our work complements and extends the work in \cite{ReiEvs13}, using a theory independent definition of equilibration to compare the two scenarios, for both pure and mixed initial states.
Following \cite{Short11, ShortFarrelly11, Malabarba14}, our definition employs a generalized distinguishability which incorporates the full outcome distribution of a measurement, corresponding to its ability to help us distinguish a time-evolving state from a time-invariant equilibrium state.

On the classical side, we show that pure systems equilibrate only when the measurement coarse-graining partitions the state-space in a very specific (and very uneven) way.
On the other hand, quantum equilibration needs only that the measurement be coarse-grained at all, with no restriction on how it partitions the state-space.
Then, using a parameter to characterize measurements on both theories, we are able to show that the range of values which permit classical equilibration is very close to the range which guarantees equilibration on any theory.
Meanwhile, quantum equilibration is possible for a much wider range of this parameter.

Furthermore, when the initial state of the system is taken to be mixed, our approach corroborates the previous results by Reimann and Evstigneev\cite{ReiEvs13}.

\section{Definitions}
\label{sec:definitions}

Although we will specifically consider quantum and classical dynamics, we start our discussion by talking about equilibration and distinguishability strictly in terms of measurements and outcome probabilities, which could be applied to almost any theory.
As such, we need to restate some of our definitions in general terms, without relying on quantum observables or density matrices.

We consider the possible states of a system to be elements in a compact convex space $\mcs$ (herein, a state space).  When we refer to functions on $\mcs$ being linear, this is with respect to convex mixtures in this space.

We compare different states using only the information provided by measurements, as defined below.
\begin{definition}[Measurement]
    \label{def:1}
    Given a state space $\mcs$ and $N\in\Nat$, a measurement with $N$ outcomes is any set of $N$ linear functions $\mcm = \Set{\pj}{j=1,\ldots,N}$, where
    \begin{align*}
      \pj \colon \mcs &\to [0,1]\\
      \rho &\mapsto \pj(\rho)
    \end{align*}
    satisfy $\sum_{j = 1}^{N} \pj(\rho) = 1,\, \forall \rho \in \mcs$.
    Each of these functions represent the probability of obtaining the \nth[j] outcome when measuring $\mcm$ on $\rho$.
\end{definition}
The maximum information that can be gathered about a state, using measurements, is the probability of each particular outcome.

\begin{definition}[Distinguishability]
    \label{def:2}
    Given two possible states of a system, $\rho$ and $\sigma \in \mcs$, and a measurement $\mcm$ with $N$ outcomes.
    The distinguishability between $\rho$ and $\sigma$ according to $\mcm$ is
    \begin{equation}
        \label{eq:8}
        D_\mcm(\rho,\sigma) =
        \frac{1}{2} \sum_{j=1}^N \left| \pj(\rho) - \pj(\sigma) \right|,
    \end{equation}
    where $D_\mcm:\mcs\times\mcs \to [0,1]$.
\end{definition}

The distinguishability is defined this way so that, after performing the measurement, the distinguishability quantifies the average probability of successfully ``guessing'' which state the system was in~\cite{Short11}, according to
\begin{equation}
    \label{eq:11}
    p_{\text{success}} = \frac{1}{2} + \frac{1}{2}D_\mathcal{M}(\sigma,\rho).
\end{equation}
When $D_\mathcal{M}(\sigma,\rho) = 0$ the measurement does not provide information that helps to distinguish $\sigma$ from $\rho$.
On the other hand, when $D_\mathcal{M}(\sigma,\rho) = 1$ the states are perfectly discriminated by this measurement.
This provides a physical and practical meaning to the distinguishability, i.e. if $D_\mathcal{M}(\sigma,\rho) \approx 0$ then $\sigma$ and $\rho$ are experimentally indistinguishable.

In order to talk about equilibration, we also need a notion of time evolution.
We describe it via a function on $\mcs \times \Real_{\geq 0}$, which is linear on $\mcs$ and whose value represents the state $\rho$ evolved by some time $t \geq 0$,
\begin{align*}
  T:\mcs\times\Real_{\geq 0}&\to\mcs \\
  (\rho, t) &\mapsto T_{t}(\rho),
\end{align*}
where $T_0(\rho)=\rho$.
For short, we'll abbreviate $T_t(\rho)$ as $\rho_t$.
Note that in general $T$ does not need to be time-independent or reversible.
However we will be particularly interested in reversible evolutions, for which there exists a function $T^{-1}$ (defined in the same way as $T$) such that $T^{-1}_t (T_t (\rho)) = T_t (T^{-1}_t (\rho)) = \rho$.
We'll use the assumption of reversilibity when we talk about classical mechanics, but general time evolution is enough to define equilibration and even prove our first theorem.

\begin{definition}[Equilibration]
    \label{def:3}
    Given a state $\rho$, a time-evolution $T$, and $0 \leq  \epsilon < 1$, we say that $\rho$ equilibrates up to $\epsilon$ (or $\epsilon$-equilibrates) under this evolution with respect to a measurement $\mcm$ when both averages $\omega = \braket{\rho_t}$ and $\braket{D_\mcm(\rho_t,\omega) }$ exist and satisfy
    \begin{equation}
        \label{eq:28}
        \braket{D_\mcm(\rho_t,\omega) } \leq \epsilon.
    \end{equation}
    Here, $\braket{\cdot} = \lim_{T \rightarrow \infty}\frac{1}{T} \int_{0}^{T} \cdot \dif t$ denotes the time average.
\end{definition}
This definition applies regardless of the dynamics which govern the evolution of $\rho_t$, i.e.\ it is theory independent, so we must account for the possibility of the averages not existing.
Fortunately, in the particular cases of interest to us the dynamics guarantee the convergence and existence of the averages.
Since we only consider compact state spaces here, in quantum mechanics the time average equals a decoherence in a finite-dimensional energy basis, and in classical mechanics the averages converge by Birkhoff's Theorem\cite{farquhar1964ergodic}. Note that in what follows we will not discuss the timescale for equilibration, which may be very long \cite{ShortFarrelly11, Malabarba14, Goldstein13}.

When $\omega = \braket{\rho_t}$ does exist, we call it the equilibrium state.
Since the probabilities are linear functions, the probabilities on $\omega$ can be written as $\pj(\omega) = \braket{\pj(\rho_t)}$. \footnote{Note that strictly speaking, we could replace the condition that $\omega$ exists with the slightly weaker condition that \unexpanded{$ \langle p_{j}(\rho_t) \rangle$}  exist for all $j$, but we use $\omega$ for convenience and to link with previous literature.}.
In addition, we have that $\braket{\rho_t} = \braket{\rho_{t+\tau}}$ for all $\tau \in \Real$, and so the equilibration of $\rho$ also implies the equilibration of $\rho_t$ for all $t$.

Furthermore, note how it is only required that $\rho$ be close to $\omega$ for most times, and not for all large-enough times.
This condition is much weaker and more physically meaningful, firstly because it doesn't preclude the possibility of recurrence in the time evolution (which is possible in Hamiltonian mechanics), and secondly because it has been shown experimentally that systems do fluctuate away from equilibrium~\cite{Donald1962, Wang02}.
This approach is well established in the field of quantum equilibration\cite{Goldsteinnew, Malabarba14, Goldstein13, ShortFarrelly11, Brandao12, Short11, Lin10, Lin09}, and is also being used to describe classical equilibration~\cite{ReiEvs13, Werndl15}.

In particular, Werndl and Frigg\cite{Werndl15} define that a macroscopic state (a region of the state space $\mcs$) is an $\alpha$-$\delta$-equilibrium state if the fraction of time spent inside it is $\geq \alpha$ for all states in $\mcs$ except for a fraction $\delta$ of them.\footnote{They call it $\alpha$-$\epsilon$-equilibrium, but we've replaced $\epsilon$ with $\delta$ here to avoid confusion with our own $\epsilon$ which actually corresponds to their $\alpha$.}
This definition is more oriented towards the macroscopic aspect, a top-down approach that defines an equilibrium macro-state in terms how much time the micro-states spend inside it.
Meanwhile we define equilibration in terms of the initial micro-state and the measurement probabilities, a bottom-up approach.
Still, the two are related in their definition of equilibration in terms of ``most times''.

Finally, it is also useful to define
\begin{definition}[Pure and mixed states]
    \label{sec:pure-states}\label{def:5}
    A state $\psi\in\mcs$ is pure if and only if it cannot be written as a convex sum of other states in $\mcs$ (i.e. the pure states are the extreme points of  $\mcs$).
    We denote as $\mcp$ the set of all pure states in $\mcs$. A state is mixed if it is not pure.

\end{definition}

We now show that reversible time evolutions must preserve purity---i.e., $\psi \in \mcp$ implies $\psi_t \in \mcp \, \forall t$.
This is because otherwise one could find a $\psi\in\mcp$ such that
\begin{align}
  \label{eq:1}
  T_t( \psi) &= q \rho + (1-q) \rho'
\end{align}
where $\rho, \rho' \in \mathcal{S}$ with $\rho \neq \rho'$, and $0< q<1$. Applying $T^{-1}$ to both sides of this equation we find
\begin{align}
  \psi &=q T^{-1}_t (\rho) + (1-q)  T^{-1}_t (\rho')
\end{align}
which means that $T^{-1}_{t} (\rho) = \psi$ and $T^{-1}_{t} (\rho') = \psi$. However, this would imply  $\rho = T_t (\psi) = \rho'$ which contradicts the assumption that $\rho \neq \rho'$.


\section{General Equilibration}
\label{sec:gener-equil}

Here, we show that a very uneven measurement coarse-graining (with respect to the state space explored by the system) is a sufficient condition for equilibration in any theory where the average $\braket{\rho_t}$ exists.
As explained above, this includes quantum and classical mechanics.

\begin{theorem}[Sufficiency]
    \label{thr:4}
    Take any $\rho \in \mcs$, any time evolution $\rho_t$ such that $\omega = \braket{\rho_t}$  and $\braket{D_\mcm(\rho_t,\omega) }$ exist, and any measurement $\mcm$.
    $\rho_t$ $\epsilon$-equilibrates under $\mcm$ if
    \begin{equation}
        \label{eq:31}
        \max_{j}\pj(\omega)  \geq 1- \frac{\epsilon}{2}.
    \end{equation}
\end{theorem}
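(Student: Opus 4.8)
The plan is to bound the integrand $D_\mcm(\rho_t,\omega)$ pointwise in $t$ by a function that is \emph{linear} in $\rho_t$, and then to take the time average, using the linearity of the measurement functions (so that $\langle p_j(\rho_t)\rangle=p_j(\omega)$, as noted just before the theorem) and the monotonicity of $\langle\cdot\rangle$. Concretely, I would fix an index $k$ for which $p_k(\omega)=\max_j p_j(\omega)$, so the hypothesis reads $p_k(\omega)\ge 1-\epsilon/2$, and isolate the $j=k$ term in \eqref{eq:8} from the rest. For $j\ne k$ the triangle inequality together with non-negativity of the $p_j$ gives $|p_j(\rho_t)-p_j(\omega)|\le p_j(\rho_t)+p_j(\omega)$, and summing over $j\ne k$ and using $\sum_j p_j=1$ yields $\sum_{j\ne k}|p_j(\rho_t)-p_j(\omega)|\le (1-p_k(\rho_t))+(1-p_k(\omega))$. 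For the $j=k$ term, the elementary inequality $|a-b|+a+b=2\max(a,b)\le 2$ for $a,b\in[0,1]$ gives $|p_k(\rho_t)-p_k(\omega)|\le 2-p_k(\rho_t)-p_k(\omega)$. Adding the two contributions and dividing by $2$ produces the pointwise bound $D_\mcm(\rho_t,\omega)\le 2-p_k(\rho_t)-p_k(\omega)$. (Equivalently, one can start from the identity $D_\mcm(\rho,\sigma)=1-\sum_j\min(p_j(\rho),p_j(\sigma))$ and invoke $\min(a,b)\ge a+b-1$ for $a,b\in[0,1]$.)

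Next I would take the time average of this inequality. Since $\langle D_\mcm(\rho_t,\omega)\rangle$ exists by hypothesis, since the time average is monotone, and since $p_k$ is linear so that $\langle p_k(\rho_t)\rangle=p_k(\omega)$, it follows that $\langle D_\mcm(\rho_t,\omega)\rangle\le 2-2p_k(\omega)\le 2-2(1-\epsilon/2)=\epsilon$, which is exactly the equilibration condition \eqref{eq:28}. Hence $\rho_t$ $\epsilon$-equilibrates under $\mcm$.

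The argument uses no reversibility and no theory-specific structure, consistent with the claim that it holds in any theory where $\omega$ exists; the only places that appeal to the standing hypotheses are the existence of the relevant averages and the identification $\langle p_k(\rho_t)\rangle=p_k(\omega)$ via linearity. I do not anticipate a serious obstacle here. The one step that is visibly wasteful is the crude treatment of the $j\ne k$ terms, where all cancellations are discarded, but because $p_k(\omega)$ carries almost all of the probability weight when $\epsilon$ is small this loss is harmless, and the resulting bound is in fact close to tight.
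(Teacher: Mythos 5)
Your proof is correct and follows essentially the same route as the paper's: isolate the dominant outcome, bound the remaining terms via $|p_j(\rho_t)-p_j(\omega)|\le p_j(\rho_t)+p_j(\omega)$, and use linearity so that $\langle p_j(\rho_t)\rangle=p_j(\omega)$. The only cosmetic difference is in the dominant term, where you use the pointwise inequality $|a-b|\le 2-a-b$ in place of the paper's $\langle f\rangle^{+}=\langle f\rangle^{-}$ identity; both yield the same bound $2(1-\max_j p_j(\omega))\le\epsilon$.
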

\begin{proof}
    \label{pr:1}
    First, without loss of generality we label as $1$ the outcome that satisfies \cref{eq:31}, then we note that $p_1(\rho_t) - p_1(\omega) \leq 1 - p_1(\omega) \leq \frac{\epsilon}{2}$.
    This leads to
    \begin{align}
      \label{eq:29}
      \braket{\left| p_1(\rho_t) - p_1(\omega) \right|}
      &= \braket{p_1(\rho_t) - p_1(\omega) }^+ + \braket{p_1(\rho_t) - p_1(\omega) }^- \notag\\
      &= 2\braket{p_1(\rho_t) - p_1(\omega) }^+ \leq {\epsilon},
    \end{align}
    where $\braket{f(t)}^\pm = \braket{\max\{{\pm f(t), 0}\}}$ and we use the fact that $\braket{f(t)}^+ = \braket{f(t)}^-$ whenever $\braket{f(t)} = 0$.

    One also has
    \begin{equation}
        \label{eq:10}
        \sum_{j = 2}^N \pj(\omega) = 1 - p_1(\omega) \leq \frac{\epsilon}{2},
    \end{equation}
    which, in turn, leads to
    \begin{align}
      \label{eq:3333}
      \braket{D_\mcm(\rho_t,\omega) }
      &= \frac{1}{2} \braket{\left| p_1(\rho_t) - p_1(\omega) \right| }\! \notag\\
      &\quad\quad\quad\quad + \frac{1}{2}\! \!\sum_{j = 2}^N \braket{\left| \pj(\rho_t) - \pj(\omega) \right| }\notag\\
      &\leq \frac{\epsilon}{2} + \frac{1}{2} \sum_{j = 2}^N \braket{\pj(\rho_t) + \pj(\omega)}\\
      &= \frac{\epsilon}{2} + \sum_{j = 2}^N \pj(\omega) \leq \epsilon. \notag
    \end{align}
\end{proof}
This result says that one will always observe equilibration if the measurement being used is bad enough, i.e., if one of the outcomes is predominantly more likely than all the others most of the time.
We will herein refer to these as highly uneven measurements.

It is important to understand that this applies to any state, pure or mixed, of any theory that matches the definitions above, be it quantum, classical or otherwise.
In other words, if a measurement satisfies this assumption for a given state and a given time evolution, it is so bad at distinguishing the time evolving state from the equilibrium state that one is guaranteed to have equilibration regardless of any other properties of the theory.

Below we take a similar approach to study what conditions are \emph{necessary} for equilibration under each theory.

\subsection{Classical Equilibration}
\label{sec:classical-1}

Both in classical and quantum mechanics, mixed states represent a lack of knowledge regarding the parameters of the system.
Thus, we start by studying the case where the initial state is pure, so any subsequent equilibration is strictly a consequence of the theory and not of previous ignorance.

Below, we show a necessary condition for classical equilibration of pure states which is very similar to the sufficient condition above.
Which means classical pure states only equilibrate when the measurement is very bad in a very specific way.
In contrast, for quantum mechanics, we provide a counter example showing that the same condition is not necessary.

In order to define classical dynamics for our purposes, we only need three of its properties.
The first property, is that time-evolution is reversible.
The second defining characteristic is that for pure states, at any point in time, the outcome of any measurement is completely determined. The third is that time-averages exist.
\begin{definition}[Classical Mechanics]
    \label{def:4}
    A given combination of state space $\mcs$, $N$-outcomes measurement $\mcm$, and time evolution $T$,  obey classical mechanics only if $T$ is reversible, the averages $\omega = \braket{\rho_t}$ and $\braket{D_\mcm(\rho_t,\omega) }$ exist for any initial state, and  $\exists\, \xi:\mcp\to \{1,2,\ldots N\}$ such that
    \begin{equation}
        \label{eq:5}
        \pj(\psi_t) = \delta_{j,\xi(\psi_t)},\, \forall \psi \in \mcp,
    \end{equation}
    with $j = 1, \ldots, N$ and $\pj \in \mcm$.
\end{definition}
Consequently, one has $\pj(\braket{\psi_t}) = \braket{\delta_{j,\xi(\psi_t)}}$.
Of course, binary measurement probabilities are not all that defines classical mechanics, there are many properties (specially on the time evolution) that are not being taken into account here.
However, since the theorem below is a necessity theorem showing how hard equilibration is, adding further constraints to our definitions cannot make equilibration any easier.

\subsubsection{Classical Equilibration of Pure States}

In words, the following theorem then says that a classical pure state will only equilibrate with respect to $\mcm$ if the evolving state spends most of its time inside the support of a single outcome of $\mcm$.
\begin{theorem}[Classical Necessity]
    \label{thr:1}
    A classical pure state $\psi$ may $\epsilon$-equilibrate with respect to $\mcm$ only if
    \begin{equation}
        \label{eq:2}
        \max_j \braket{\pj(\psi_t)} = \max_j \pj(\omega) \ge 1- \epsilon,
    \end{equation}
    where $\omega = \braket{\psi_t}$.
\end{theorem}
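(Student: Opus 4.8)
The plan is to exploit the defining property of classical mechanics, \cref{eq:5}, that on a pure state every measurement outcome is deterministic, so as to obtain an \emph{exact} expression for the time-averaged distinguishability rather than merely an upper bound. First I would abbreviate $q_j := p_j(\omega) = \braket{p_j(\psi_t)} = \braket{\delta_{j,\xi(\psi_t)}}$, noting that $(q_1,\dots,q_N)$ is a genuine probability distribution ($q_j \ge 0$, $\sum_j q_j = 1$). For a fixed time $t$, put $k = \xi(\psi_t)$; then the outcome distribution of $\psi_t$ is the point mass on $k$, so
\begin{equation*}
  D_\mcm(\psi_t,\omega) = \frac{1}{2}\left( |1 - q_k| + \sum_{j \ne k} q_j \right) = 1 - q_k = 1 - p_{\xi(\psi_t)}(\omega).
\end{equation*}

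Next I would take the time average. Since the sum over $j$ is finite, linearity of $\braket{\cdot}$ gives $\braket{q_{\xi(\psi_t)}} = \sum_j q_j \braket{\delta_{j,\xi(\psi_t)}} = \sum_j q_j^2$, with all averages existing by the hypotheses in \cref{def:4}. Hence
\begin{equation*}
  \braket{D_\mcm(\psi_t,\omega)} = 1 - \sum_{j=1}^{N} p_j(\omega)^2 ,
\end{equation*}
and the $\epsilon$-equilibration hypothesis \cref{eq:28} becomes $\sum_j p_j(\omega)^2 \ge 1 - \epsilon$.

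To finish, I would use $\sum_j p_j(\omega) = 1$ together with $p_j(\omega) \ge 0$ to bound $\sum_j p_j(\omega)^2 \le \bigl(\max_j p_j(\omega)\bigr)\sum_j p_j(\omega) = \max_j p_j(\omega)$, which combined with the previous inequality yields $\max_j p_j(\omega) \ge 1 - \epsilon$, exactly the claimed condition (the equality $\max_j \braket{p_j(\psi_t)} = \max_j p_j(\omega)$ is immediate from $p_j(\omega) = \braket{p_j(\psi_t)}$).

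The computation is short and I do not anticipate a genuine obstacle. The only point requiring care is the interchange of the time average with the finite sum over outcomes and the convergence of the averages involved, both of which are guaranteed by the assumptions folded into \cref{def:4}. The substantive content is the clean identity $\braket{D_\mcm(\psi_t,\omega)} = 1 - \sum_j p_j(\omega)^2$, which makes transparent why a classical pure state can be close to equilibrium only when the distribution $\{p_j(\omega)\}$ is overwhelmingly concentrated on a single outcome.
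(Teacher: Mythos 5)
Your proposal is correct and follows essentially the same route as the paper: both derive the exact identity $\braket{D_\mcm(\psi_t,\omega)} = 1 - \sum_j \pj(\omega)^2$ from the deterministic outcomes of \cref{eq:5} and then bound $\sum_j \pj(\omega)^2 \leq \max_j \pj(\omega)$. The only cosmetic difference is that you evaluate the distinguishability at a fixed time before averaging, whereas the paper averages each term $|\pj(\psi_t)-\pj(\omega)|$ separately; the content is identical.
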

\begin{proof}
    First, one has that $\forall \pj \in \mcm$
    \begin{align}
      \label{eq:3}
      \left| \pj(\psi_t) - \pj(\omega) \right|
      &= (1 - \pj(\omega))\delta_{j,\xi(\psi_t)} \notag\\
      &\qquad\quad+ \pj(\omega) (1- \delta_{j,\xi(\psi_t)}). \notag\\
      \Rightarrow \braket{\left| \pj(\psi_t) - \pj(\omega) \right|}
      &= 2 \pj(\omega) \left[ 1 - \pj(\omega) \right].
    \end{align}
    where in the second step we have used $\braket{\delta_{j,\xi(\psi_t)}} = \braket{\pj(\psi_t)} =\pj(\omega)$. This implies
    \begin{align}
      \label{eq:4}
      \braket{D_\mcm(\psi_t,\omega) }
      &= \sum_{j = 1}^N \pj(\omega) \left[ 1 - \pj(\omega) \right] \notag\\[-.2cm]
      &= 1 - \sum_{j = 1}^N \pj(\omega)^2.
    \end{align}
    The $\epsilon$-equilibration condition is then written as
    \begin{align}
      & 1 - \sum_{j = 1}^N \pj(\omega)^2 < \epsilon \notag\\[-.3cm]
      \Rightarrow & 1 - \epsilon < \sum_{j = 1}^N \pj(\omega)^2 < \max_{j}\pj(\omega)
    \end{align}
\end{proof}
Note how similar the inequality in \cref{eq:2} is to \cref{eq:31}.
The Sufficiency \cref{thr:4} shows how uneven a measurement needs to be so that even the most stubborn of systems must equilibrate under it.
The Necessity \cref{thr:1} shows that any classical measurement which allows pure states to $\epsilon$-equilibrate is at most an $\epsilon/2$ away from being one of these exceptionally uneven measurements.
That is, pure states in classical mechanics are among the hardest of all systems to equilibrate.

\subsubsection{Classical Equilibration of Mixed States}
\label{sec:mixed-classical-case}

In the case of Hamiltonian dynamics, the theorem below is a statement that equilibration will be achieved when (i) there is a chaotic subspace of $\mcp$, and (ii) the initial state can be represented as a mixture of states mostly within this chaotic subspace.
This theorem is an extension of the results in \cite{ReiEvs13}, applied to the distinguishability as defined in \cref{eq:8}.
\begin{theorem}
    \label{thr:3}
    A classical mixed state $\rho$ will $\epsilon$-equilibrate with respect to $\mcm$ if there exists a $\mcp_c \subseteq \mcp$ such that:
    \begin{enumerate}
      \item Two different time-parametrized states in $\mcp_c$ are uncorrelated, when averaging over all time.
        That is,
        \begin{equation}
            \label{eq:25}
            \braket{\pj(\psi_t)\pj(\psi'_t)} = \pj(\braket{\psi_t})\pj(\braket{\psi'_t}),
        \end{equation}
        for any $\pj\in \mcm$ and almost all pairs $(\psi, \psi') \in \mcp_c\times \mcp_c$.

      \item The state $\rho$ can be described as a mixture of pure states mostly contained in $\mcp_c$, i.e.
        \begin{equation}
            \rho = \int_{\mcp}\mu(\psi) \psi  \dif\psi.
        \end{equation}
        where $\mu(\psi)$ is a non-negative function satisfying $\int \mu(\psi) \dif\psi =1$ \footnote{Note that by taking $\mu(\psi)$ to be a function, we exlcude probability distributions involving delta functions, which could yield different results because  $\mathcal{I}_j^{\mcp_c \times \mcp_c}$ could be non-zero in that case.}, such that
        \begin{equation}
            \label{eq:26}
            \int_{\mcp_c} \mu(\psi) \dif\psi \geq 1- \delta,
        \end{equation}
        where $\delta =2 \epsilon^2/N \leq \frac{1}{2}$.
    \end{enumerate}
\end{theorem}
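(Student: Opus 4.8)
The plan is to reduce the time-averaged distinguishability $\braket{D_\mcm(\rho_t,\omega)}$ to a sum of time-averaged variances of the outcome probabilities of $\rho_t$, and then to show that each such variance is small: the contributions coming from inside the ``chaotic'' region $\mcp_c$ vanish because of condition~1, while the contributions coming from outside $\mcp_c$ carry little weight because of condition~2.

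First I would use the linearity of $T$ and of the $\pj$ to write, for each outcome $j$, $\pj(\rho_t)=\int_\mcp\mu(\psi)\,\pj(\psi_t)\dif\psi$ and $\pj(\omega)=\braket{\pj(\rho_t)}=\int_\mcp\mu(\psi)\,\pj(\braket{\psi_t})\dif\psi$. Setting $g_j(\psi,t):=\pj(\psi_t)-\pj(\braket{\psi_t})$, which has zero time average and satisfies $|g_j|\le 1$ because $\pj(\psi_t)=\delta_{j,\xi(\psi_t)}\in\{0,1\}$ for pure $\psi$, this gives $\pj(\rho_t)-\pj(\omega)=\int_\mcp\mu(\psi)\,g_j(\psi,t)\dif\psi$. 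Squaring, time-averaging, and exchanging the $\psi$-integration with the time average (legitimate since $\mu$ is integrable and $|g_j|\le 1$, so Fubini and dominated convergence apply) gives $\braket{\left(\pj(\rho_t)-\pj(\omega)\right)^2}=\int_\mcp\!\int_\mcp\mu(\psi)\mu(\psi')\braket{g_j(\psi,t)\,g_j(\psi',t)}\dif\psi\dif\psi'$, and a short expansion using $\braket{\pj(\psi_t)}=\pj(\braket{\psi_t})$ shows $\braket{g_j(\psi,t)g_j(\psi',t)}=\braket{\pj(\psi_t)\pj(\psi'_t)}-\pj(\braket{\psi_t})\pj(\braket{\psi'_t})$, which is exactly the quantity that condition~1 asserts is zero for almost all $(\psi,\psi')\in\mcp_c\times\mcp_c$.

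Next I would bound $\braket{D_\mcm(\rho_t,\omega)}$ by applying the Cauchy--Schwarz inequality twice: once as $\braket{|X|}\le\sqrt{\braket{X^2}}$ and once as $\sum_{j=1}^N\sqrt{a_j}\le\sqrt{N\sum_j a_j}$, giving $\braket{D_\mcm(\rho_t,\omega)}=\tfrac12\sum_j\braket{|\pj(\rho_t)-\pj(\omega)|}\le\tfrac12\sqrt{N\sum_j\braket{\left(\pj(\rho_t)-\pj(\omega)\right)^2}}$. It then suffices to show $\sum_j\braket{\left(\pj(\rho_t)-\pj(\omega)\right)^2}\le 2\delta$, since with $\delta=2\epsilon^2/N$ the right-hand side is $\tfrac12\sqrt{2N\delta}=\epsilon$. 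To get this bound I would argue that, because $\mu$ is a genuine density (no delta functions, which is precisely the caveat of the footnote), the null set on which condition~1 might fail does not contribute, so the $\mcp_c\times\mcp_c$ part of the double integral drops out of $\sum_j\braket{(\pj(\rho_t)-\pj(\omega))^2}$; on the complement $(\mcp\times\mcp)\setminus(\mcp_c\times\mcp_c)$ I would bound $\left|\sum_j\braket{g_j(\psi,t)g_j(\psi',t)}\right|\le 1$, using that $\sum_j\braket{\pj(\psi_t)\pj(\psi'_t)}=\braket{\delta_{\xi(\psi_t),\xi(\psi'_t)}}\in[0,1]$ while $\sum_j\pj(\braket{\psi_t})\pj(\braket{\psi'_t})\in[0,1]$; and, by condition~2 together with a union bound, the product-measure weight of that complement region is at most $2\int_{\mcp\setminus\mcp_c}\mu\dif\psi\le 2\delta$. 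Multiplying through yields $\sum_j\braket{(\pj(\rho_t)-\pj(\omega))^2}\le 2\delta$, and the theorem follows (the hypothesis $\delta\le\tfrac12$ just keeps $1-\delta$ a meaningful lower bound on the weight inside $\mcp_c$).

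The step I expect to be the main obstacle is handling the sum over outcomes without losing a factor of $N$: the naive route of bounding each $\left|\braket{g_j(\psi,t)g_j(\psi',t)}\right|$ by a constant and then summing costs a factor $N$ and gives the wrong scaling in $\delta$. What makes the estimate tight is pulling the $\sum_j$ inside the square root via Cauchy--Schwarz and then exploiting that $\sum_j\delta_{j,\xi(\psi_t)}\delta_{j,\xi(\psi'_t)}$ collapses to the single indicator $\delta_{\xi(\psi_t),\xi(\psi'_t)}$, which is bounded by $1$ regardless of $N$. A secondary point that needs care is the justification for exchanging the time average with the phase-space integral and for discarding the measure-zero set of ``bad'' pairs, which is exactly where the assumption that $\mu$ is a function rather than a distribution enters.
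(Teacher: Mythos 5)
Your proposal is correct and follows essentially the same route as the paper: bound $\braket{D_\mcm(\rho_t,\omega)}$ by $\tfrac12\sqrt{N\sum_j\braket{(\pj(\rho_t)-\pj(\omega))^2}}$ via two applications of Cauchy--Schwarz, rewrite each variance as a double integral of time-averaged covariances over $\mcp\times\mcp$, kill the $\mcp_c\times\mcp_c$ block with condition~1, and bound the remaining blocks by their product-measure weight $\le 2\delta$ using $\sum_j\braket{\pj(\psi_t)\pj(\psi'_t)}\le 1$. Your union bound on the complement of $\mcp_c\times\mcp_c$ and the collapse $\sum_j\delta_{j,\xi(\psi_t)}\delta_{j,\xi(\psi'_t)}=\delta_{\xi(\psi_t),\xi(\psi'_t)}$ are only cosmetic variants of the paper's explicit four-region decomposition and its bound $\pj(\psi'_t)\le1$.
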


Note that $\psi$ and $\psi'$ can be very close.
The reason $\mcp_c$ is called the chaotic subspace is that after a large enough time, even an infinitesimal difference between these two states must become large enough make their probabilities be uncorrelated.

The following proof is similar to calculations by Reimann and Evstigneev\cite{ReiEvs13}, with the difference that we use the distinguishability instead of measurement expectation values.
\begin{proof}[Proof of \cref{thr:3}]
    Firstly, we note that, for any $\rho$ (mixed or pure) and any time evolution,
    \begin{align}
      \label{eq:14}
      \braket{D_\mcm(\omega,\rho_t)}
      &= \frac{1}{2} \sum_{j = 1}^{N} \braket{\left| \pj(\rho_t) - \pj(\omega) \right| } \notag\\
      &= \frac{1}{2} \sum_{j = 1}^{N} \braket{\sqrt{\left[ \pj(\rho_t) - \pj(\omega) \right]^2} }\notag\\
      &\leq \frac{1}{2} \sum_{j = 1}^{N} \sqrt{\braket{\left[ \pj(\rho_t) - \pj(\omega) \right]^2}} \notag\\
      &\leq \frac{1}{2} \sqrt{N \sum_{j = 1}^{N} \braket{\left[ \pj(\rho_t) - \pj(\omega) \right]^2}} \notag\\
      &= \frac{1}{2} \sqrt{N \sum_{j = 1}^{N} \braket{\pj(\rho_t)^2} - \pj(\omega)^2}.
    \end{align}
    Applying this equation to the current scenario, we have
    \begin{align}
      \label{eq:19}
      \frac{4}{N}\braket{D_\mcm(\omega,\rho_t)}^2
      &= \sum_{j = 1}^{N} \braket{\pj(\rho_t)^2} - \pj(\omega)^2 \notag\\
      &= \sum_{j = 1}^{N} \mathcal{I}_j^{\mcp \times \mcp} \notag\\[-.2cm]
      &= \sum_{j = 1}^{N} \left( \mathcal{I}_j^{\mcp_c \times \mcp_c} + \mathcal{I}_j^{\mcp_c \times \mcp_p} \right.\notag\\
      &\quad\quad\quad\quad\left. + \mathcal{I}_j^{\mcp_p \times \mcp_c} + \mathcal{I}_j^{\mcp_p \times \mcp_p} \right), \notag
    \end{align}
    where $\omega = \braket{\rho_t}$, $\mcp_p = \mcp \setminus \mcp_c$ is called the periodic subspace and
    \begin{equation}
        \label{eq:27}
        \hspace{-0.25cm}
        \mathcal{I}_j^R = \!\!\iint_{R}\Big[
        \braket{\pj(\psi_t)\pj(\psi'_t)} - \pj(\braket{\psi_t})\pj(\braket{\psi'_t})
        \Big] \mu(\psi)\mu(\psi') \; \dif \psi \, \dif \psi'.
    \end{equation}
    From \cref{eq:25}, it is clear that
    \begin{equation}
        \label{eq:20}
        \sum_{j = 1}^{N} \mathcal{I}_j^{\mcp_c \times \mcp_c} = 0.
    \end{equation}
    From \cref{eq:26}, combined with the identity
    \begin{align}
      \label{eq:59}
      \sum_{j=1}^N\Big[ \braket{\pj(\psi_t)\pj(\psi'_t)} - \pj(\braket{\psi_t})\pj(\braket{\psi'_t}) \Big]
      &\leq \sum_{j=1}^N\braket{\pj(\psi_t)\pj(\psi'_t)} \nonumber \\
      & \leq \sum_{j=1}^N\braket{\pj(\psi_t)} \nonumber \\
      &\leq 1
    \end{align}
    and given that we have assumed $\delta \leq \frac{1}{2}$, we have
    \begin{align}
      \label{eq:21}
      \sum_{j = 1}^{N} \mathcal{I}_j^{\mcp_p \times \mcp_p} &\leq \delta^ 2, \notag\\
      \sum_{j = 1}^{N} \mathcal{I}_j^{\mcp_p \times \mcp_c} = \sum_{j = 1}^{N} I_j^{\mcp_c \times \mcp_p} &\leq \delta(1-\delta).
    \end{align}
    Which finally gives
    \begin{equation}
        \label{eq:22}
        \braket{D_\mcm(\omega,\rho^\mu_t)} \leq \sqrt{\frac{N \delta}{2}} \leq \epsilon.
    \end{equation}
\end{proof}

\subsection{Quantum Equilibration}
\label{sec:quantum-1}

Quantum states are represented by density matrices acting on a Hilbert space.
The measurement $\mcm$ is defined in terms of a set of positive operators $\oj$ acting on the same space, each corresponding to an outcome of the observable and satisfying $\sum_{j = 1}^{N} \oj = \id$.
The probabilities are then defined by $\pj(\rho) = \tr{\oj\rho}$.

In the quantum case, the condition of \cref{thr:4} is analogous to saying that one of the operators $\oj$ is close to the identity on almost all of the accessible state space (i.e. over the energy states which occur in $\rho$).
However, while this condition is sufficient, it is not at all necessary and even relatively fine-grained measurements will lead to equilibration.

As such, the following theorem is not a necessity statement like \cref{thr:1} in the classical case.
Instead, we provide an alternative sufficiency theorem which applies to measurements not encompassed by \cref{thr:4} (those for which all $\pj(\omega)$ are very small).

Remarkably, as has been well investigated\cite{Goldstein14, Malabarba14, Masanes13, Goldstein13, Campos12, Brandao12, ShortFarrelly11, Short11, Lin09}, we can obtain the same bounds on quantum equilibration for both pure and mixed states.
Here we provide an improved version of a bound known from the literature.
\begin{theorem}
    \label{thr:5}
    The average distinguishability between a general quantum state $\rho\in\mcs$ evolving via a static Hamiltonian,  and its time average with respect to an $N$-outcome measurement $\mcm$ satisfies
    \begin{equation}
        \label{eq:6}
        \braket{D_{\mcm}(\rho_t,\omega)} \leq \frac{1}{2} \sqrt{D_G \frac{N - 1}{\de}}.
    \end{equation}
    where $D_G$ is the maximum degeneracy among gaps in the system's spectrum and $\de^{-1} = \sum_{n} \tr{\rho\Pi_n}^2 $ is the effective dimension, with $\Pi_n$ being the projector onto the $n$-th energy eigenspace.
\end{theorem}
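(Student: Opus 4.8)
The plan is to reduce the bound to a single inequality on the outcome-summed, time-averaged square fluctuation of the measurement statistics. Put $\tilde{p}_j(t):=\pj(\rho_t)-\pj(\omega)=\tr{\oj(\rho_t-\omega)}$, so that $D_\mcm(\rho_t,\omega)=\tfrac12\sum_j\lvert\tilde{p}_j(t)\rvert$ and $\sum_j\tilde{p}_j(t)=0$ for every $t$. The first move is to pass from this $\ell^1$ sum to an $\ell^2$ sum over outcomes by Cauchy--Schwarz --- sharpened by the constraint $\sum_j\tilde{p}_j(t)=0$, so that one of the $N$ deviations is fixed by the others and a factor $N-1$ suffices in place of $N$ --- and then to apply concavity of the square root (Jensen) to the time average, giving
\[
  \braket{D_\mcm(\rho_t,\omega)}\;\le\;\tfrac12\sqrt{(N-1)\,\braket{\textstyle\sum_j\tilde{p}_j(t)^2}}\,.
\]
Everything then rests on showing $\braket{\sum_j\tilde{p}_j(t)^2}\le D_G/\de$.

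To prove that, I would diagonalise the Hamiltonian and split $\rho$ by energy gaps: with $\rho_t=\sum_{m,n}\expo{-i(E_m-E_n)t}\Pi_m\rho\Pi_n$ and $\omega=\sum_m\Pi_m\rho\Pi_m$, write $\rho_t-\omega=\sum_{g\ne0}\expo{-igt}R_g$, where $R_g:=\sum_{E_m-E_n=g}\Pi_m\rho\Pi_n$ and $\dg{R_g}=R_{-g}$. Since $\tilde{p}_j(t)^2=\tr{(\oj\otimes\oj)\,(\rho_t-\omega)\otimes(\rho_t-\omega)}$, summing over $j$ and taking the infinite-time average --- which keeps only the resonant pairings $g'=-g$ --- gives
\[
  \braket{\textstyle\sum_j\tilde{p}_j(t)^2}\;=\;\sum_{g\ne0}\sum_j\bigl\lvert\tr{\oj R_g}\bigr\rvert^2 .
\]
Now I would bound each gap sector. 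Positivity of $\rho$ lets me factor $R_g=\omega^{1/2}\widehat{\rho}_g\,\omega^{1/2}$ with the entries of $\widehat{\rho}_g$ bounded by one, so that $\tr{\oj R_g}=\tr{(\omega^{1/2}\oj\omega^{1/2})\,\widehat{\rho}_g}$; a Cauchy--Schwarz over the at most $D_G$ pairs of energy eigenspaces sharing a common gap $g$ then yields $\sum_{g\ne0}\lvert\tr{\oj R_g}\rvert^2\le D_G\,\tr{\oj\omega\oj\omega}$. Finally --- and this is the step that produces the effective dimension rather than a Hilbert-space dimension --- I would collapse the sum over outcomes using the completeness relation in its ``doubled'' form $\sum_j\oj\otimes\overline{\oj}\le\id$, which gives $\sum_j\tr{\oj\omega\oj\omega}\le\tr{\omega^2}$; combined with $\tr{\omega^2}\le\sum_n\tr{\Pi_n\rho}^2=\de^{-1}$ this is exactly $\braket{\sum_j\tilde{p}_j(t)^2}\le D_G/\de$, and assembling the two displays yields \cref{eq:6}.

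The main obstacle is this last step, which is a trap as much as a computation. It is tempting to bound \emph{both} the matrix elements of $\rho$ (by $\lvert\bra{m}\rho\ket{n}\rvert\le\sqrt{\tr{\Pi_m\rho}\,\tr{\Pi_n\rho}}$) \emph{and} the off-diagonal elements of each $\oj$ by square roots of its diagonal, but that double use of positivity is far too lossy --- it degrades the bound to something of order $1-\de^{-1}$ instead of $\de^{-1}$ --- so the POVM completeness relation must be kept intact and used to eliminate the ``which-outcome'' freedom before positivity of $\rho$ is invoked. The remaining work is bookkeeping that still needs care: accommodating degeneracies both in the energies (hence the eigenprojectors $\Pi_n$ in place of rank-one projectors) and in the gaps (hence a single factor $D_G$, not $D_G^2$), and tracking the numerical constants so that the promised improvement from $N$ to $N-1$ over the known bound actually survives rather than being absorbed into slack.
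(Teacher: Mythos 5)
Your architecture---reduce to $\braket{D_\mcm(\rho_t,\omega)}\le\tfrac12\sqrt{c_N\braket{\sum_j\tilde p_j(t)^2}}$ plus $\braket{\sum_j\tilde p_j(t)^2}\le D_G/\de$---is sound, and your proof of the second inequality goes through: the gap decomposition with the resonance condition, the per-gap Cauchy--Schwarz yielding a single factor $D_G$, the bound $\sum_j\tr{\oj\omega\oj\omega}\le\tr{\omega^2}$, and $\tr{\omega^2}\le\sum_n\tr{\Pi_n\rho}^2=\de^{-1}$ are all correct, and working directly with the eigenprojectors $\Pi_n$ handles mixed states and energy degeneracies without the purification detour the paper takes. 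The genuine gap is in the first step: the claimed sharpening of the $\ell^1$-to-$\ell^2$ comparison from $N$ to $N-1$ under the constraint $\sum_j\tilde p_j=0$ is simply false. For $N=2$ and $\tilde p=(a,-a)$ you have $\tfrac12\sum_j|\tilde p_j|=a$ but $\tfrac12\sqrt{(N-1)\sum_j\tilde p_j^2}=a/\sqrt2$; more generally the balanced vector with $N/2$ entries $+a$ and $N/2$ entries $-a$ sums to zero yet saturates the unconstrained constant $N$, so the zero-sum condition buys nothing at this step (fixing one coordinate restricts to a hyperplane but does not lower the worst-case $\|\cdot\|_1/\|\cdot\|_2$ ratio). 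With the correct constant $N$ your argument proves $\braket{D_\mcm(\rho_t,\omega)}\le\tfrac12\sqrt{N D_G/\de}$, i.e.\ the pre-existing bound, not the stated theorem---and the $N\to N-1$ improvement is precisely what the theorem advertises.

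The paper does harvest $\sum_j\tilde p_j=0$, but at the level of the second moments rather than of the norm comparison: since $\tr{\tfrac{\id}{N}(\rho_t-\omega)}=0$, one may replace $\oj$ by $\oj-\tfrac{\id}{N}$ before bounding the variance. In your notation, $\braket{\tilde p_j(t)^2}=\sum_{g\neq0}\left|\tr{(\oj-\tfrac{\id}{N})R_g}\right|^2\le D_G\tr{(\oj-\tfrac{\id}{N})\omega(\oj-\tfrac{\id}{N})\omega}$ (your per-gap argument never used positivity of the observable, only of $\rho$, so it applies verbatim to the shifted operator). Expanding and using $\tr{\oj\omega\oj\omega}\le\tr{\oj\omega^2}$ together with $\sum_j\oj=\id$ gives
\begin{equation*}
\sum_{j}\braket{\tilde p_j(t)^2}\;\le\;D_G\left(1-\frac{2}{N}+\frac{1}{N}\right)\tr{\omega^2}\;=\;\frac{N-1}{N}\,\frac{D_G}{\de},
\end{equation*}
and the ordinary factor-$N$ Cauchy--Schwarz over outcomes then yields $N\cdot\frac{N-1}{N}=N-1$ under the square root. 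That single substitution repairs your proof; as written, the step you flag as the source of the improvement is the one step that fails.
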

The proof of this theorem is a fairly straightforward improvement on a previous proof from the literature \cite{ShortFarrelly11}, so we present it in \cref{sec:quantum-th}.
Then, it is easy to apply this result to the current definition of $\epsilon$-equilibration.

\begin{corollary}
    \label{thr:2}
    Under the conditions of \cref{thr:5}, a general quantum state $\rho$ will $\epsilon$-equilibrate with respect to an $N$-outcome measurement $\mcm$ if
    \begin{equation}
        \label{eq:9}
        N \leq 4 \frac{\de}{D_G} \epsilon^2 + 1,
    \end{equation}
    where $D_G$ and $\de$ are defined as in  \cref{thr:5}.
\end{corollary}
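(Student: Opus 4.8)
The plan is to obtain \cref{thr:2} directly from \cref{thr:5} by imposing the $\epsilon$-equilibration condition of \cref{def:3} and then inverting the resulting inequality for $N$. First I would invoke \cref{thr:5}: for a general quantum state $\rho$ evolving under a static Hamiltonian with an $N$-outcome measurement $\mcm$,
\begin{equation*}
  \braket{D_{\mcm}(\rho_t,\omega)} \leq \frac{1}{2}\sqrt{D_G\,\frac{N-1}{\de}}.
\end{equation*}
Since the relevant averages exist in this setting (as noted after \cref{def:3}, the time average in finite dimensions is a dephasing in the energy eigenbasis), \cref{def:3} tells us $\rho$ is $\epsilon$-equilibrated as soon as $\braket{D_{\mcm}(\rho_t,\omega)}\leq\epsilon$. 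Hence it suffices to require that the right-hand side above be at most $\epsilon$.

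The second and final step is to solve $\tfrac12\sqrt{D_G(N-1)/\de}\leq\epsilon$ for $N$. Because $x\mapsto\tfrac12\sqrt{x}$ is increasing on $[0,\infty)$ and all the quantities involved ($D_G$, $N-1$, $\de$, $\epsilon$) are non-negative, this is equivalent to $D_G(N-1)/\de\leq 4\epsilon^2$, i.e.\ to $N-1\leq 4(\de/D_G)\epsilon^2$, i.e.\ to $N\leq 4(\de/D_G)\epsilon^2+1$, which is exactly \cref{eq:9}.

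There is essentially no obstacle here; the proof is a one-line manipulation. The only points meriting a sentence of care are that \cref{def:3} uses the non-strict inequality $\braket{D_\mcm(\rho_t,\omega)}\leq\epsilon$, so bounding the right-hand side of \cref{thr:5} non-strictly by $\epsilon$ is enough, and that the non-negativity of all terms justifies squaring both sides. No case analysis or further estimates are required, so I would simply present these two steps as the complete proof.
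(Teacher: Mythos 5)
Your proposal is correct and is exactly the argument the paper intends: the corollary is obtained by requiring the bound of \cref{thr:5} to be at most $\epsilon$ and solving for $N$, which the paper leaves implicit as ``easy to apply.'' No issues.
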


The effective dimension, $\de$, is a recurring parameter in the field of quantum equilibration.
It roughly quantifies how many energy eigenstates a system occupies with significant probability and is usually assumed to be very large.

Considering the effective dimension typically scales exponentially in the number of particles, it easily outgrows the number of outcomes of any conceivable physical measurement.
In this case the system will equilibrate no matter how the outcomes partition the Hilbert space.

Here, we see that the quantum conditions of a system with a small number of degenerate energy gaps and a state with high effective dimension achieve a similar effect as the classical condition of chaoticity of the initial state.
In fact, apart from some small constants, the fraction $\delta$ of the classical distribution which lies outside the chaotic subspace is under the same restriction as $D_G/\de$, which suggests these two quantities are analogues of each other.

\subsection{Multiple Measurements}
\label{sec:n-measurements}

Instead of defining equilibration in terms of the distinguishability with respect to a single measurement, one can also take a set of $K$ measurements and consider the maximum distinguishability amongst all of them at any point in time.
This approach\cite{ShortFarrelly11} describes the scenario where one is capable of performing different measurements on the system, and always knows which one would be best at distinguishing the state at a given time from the equilibrium state (or has multiple copies of the system and performs all measurements).

In the language of this paper, we consider a set of $K$ measurements $\mbm=\Set{\mcm_i}{i=1,\ldots,K}$ and define the distinguishability with respect to this set of measurements as
\begin{equation}
    \label{eq:12}
    {D_{\mbm}(\rho_t,\omega)} = \max_{\mcm\in\mbm} {D_\mcm(\rho_t,\omega)}.
\end{equation}
Given this change in the definition, it is useful to consider how each of the above theorems would have to be adapted to account for multiple measurements.

\Cref{thr:4,thr:3,thr:2} still hold if we replace $\epsilon$ with $\epsilon/K$, where their respective assumptions must hold for all $\mcm\in\mbm$.
Then we have that
\begin{align}
  \label{eq:15}
  \braket{{D_{\mbm}(\rho_t,\omega)}}
  &= \braket{\max_{\mcm\in\mbm} {D_\mcm(\rho_t,\omega)}\!\!} \notag\\
  &\leq\!\! \sum_{\mcm\in\mbm}\!\! \braket{{D_\mcm(\rho_t,\omega)}} \leq K \frac{\epsilon}{K}.
\end{align}

\Cref{thr:1} still holds if all $\mcm\in\mbm$ satisfy \cref{eq:2}, but the equation itself need not be changed.

\section{Discussion}
\label{sec:discussion}

Here, we have used the distinguishability to define an operational, theory-independent, notion of equilibration with respect to a given measurement. Although we have applied this to quantum and classical theory, it could be applied in other cases such as general probabilistic theories \cite{barrett07}.

We first show a sufficient condition for equilibration in any theory, which depends on the largest average probability among the measurement outcomes. We find that a value of $1-\frac{\epsilon}{2}$ for this parameter is a sufficient condition for $\epsilon$-equilibration for any theory that fits our definitions---which simply means one always observes equilibration if the measurement being used is uneven enough.

In order to achieve equilibration of pure states under classical Hamiltonian dynamics (simply by virtue of how classical measurements are defined), it is necessary that this parameter be at least $1- \epsilon$, showing that classical mechanics is very similar to the worst possible case.

In contrast, we have also shown an alternative sufficient condition for quantum equilibration, which shows that it can happen even when this parameter is very close to $0$.
This quantitatively shows that equilibration of pure states is easier under quantum dynamics, at least where it pertains to measurement ignorance. Indeed,  the quantum case seems closer to the classical mixed state case, where our results corroborate other recent results\cite{ReiEvs13}.

While it is difficult, if not impossible, to experimentally prepare large systems in pure states, these pure-state results are very important because they show that equilibration of microscopic systems is a fundamental aspect of physics, not just a consequence of ignorance.
In particular the results imply it is much harder to keep quantum systems out of equilibrium than one might think, even for something as small as a nanoscale system.
After all, even if one takes every possible action to reliably prepare it in a pure state and reliably isolate it from the environment, the measurement used to examine it still needs to have a number of outcomes exponential in the number of particles in the system, otherwise equilibration will be inevitable (assuming there aren't very many degenerate energy gaps).

\acknowledgements
AJS acknowledges support from the Royal Society and FQXi through SVCF\@.
ASLM acknowledges support from the CNPq.
TF is grateful for support from the ERC grants QFTCMPS and SIQS, and by the cluster of excellence EXC201 Quantum Engineering and Space-Time Research.

\bibliographystyle{unsrt}

\appendix

\section{\Cref{thr:5}}
\label{sec:quantum-th}

For clarity, we start by proving a lemma for pure states, and then use that to prove \cref{thr:5} for general states.
As mentioned in the main text, this result is an improvement over previous bounds in the literature by a factor of $\sqrt{N}$.
The step responsible for this improvement is \cref{eq:7}.

\begin{lemma}
    \label{le:4}
    Given a finite-dimensional Hilbert space $\hil$, a projector $P$, and a pure initial state $\rho:\hil \rightarrow \hil$ evolving under a Hamiltonian with energy levels $E_n$, then there exists an energy basis $\{\ket{n}\}$ such that
    \begin{equation}
        \label{ap-eq:35}
        \braket{\tr{P(\rho_t-\omega)}^2} \leq
        \sum_{n \neq j} \sum_{k \neq l} v_{nj} v_{kl}^* \delta_{G_{nj},G_{kl}}.
    \end{equation}
    where $G_{nj} = E_n - E_j$, $\omega = \braket{\rho_t}$, and $v_{nj} = \rho_{nj}P_{jn}$, in which $\rho_{nj} = \braket{n|\rho|j}$, $P_{nj} = \braket{n|P|j}$.
\end{lemma}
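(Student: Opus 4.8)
The plan is to carry out the standard dephasing computation directly in an energy eigenbasis. First I would fix an orthonormal energy basis $\{\ket{n}\}$; observe that the time-averaged state $\omega=\braket{\rho_t}$ equals the fully dephased state $\sum_n\Pi_n\rho\Pi_n$ irrespective of this choice, so any energy basis in fact suffices for the stated inequality, although it is convenient to pick one that diagonalizes $\rho$ (equivalently $\omega$) within each degenerate eigenspace. In such a basis $(\rho_t)_{nj}=\rho_{nj}\expo{-iG_{nj}t}$, and therefore $\tr{P\rho_t}=\sum_{n,j}v_{nj}\expo{-iG_{nj}t}$ with $v_{nj}=\rho_{nj}P_{jn}$.

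Next I would compute the two time averages. Since $\braket{\expo{-i\lambda t}}=\delta_{\lambda,0}$ for each of the finitely many real $\lambda$ that occur, one gets $\tr{P\omega}=\braket{\tr{P\rho_t}}=\sum_{n,j:\,G_{nj}=0}v_{nj}$, so the zero-gap terms cancel upon subtraction:
\[
\tr{P(\rho_t-\omega)}=\sum_{(n,j):\,G_{nj}\neq0}v_{nj}\expo{-iG_{nj}t}.
\]
Because $P$ and $\rho_t-\omega$ are Hermitian, $\tr{P(\rho_t-\omega)}$ is real, so squaring it is the same as taking its modulus squared. Expanding the double sum and using $\braket{\expo{-i(G_{nj}-G_{kl})t}}=\delta_{G_{nj},G_{kl}}$ then yields the \emph{exact} identity
\[
\braket{\tr{P(\rho_t-\omega)}^2}=\sum_{(n,j):\,G_{nj}\neq0}\;\sum_{(k,l):\,G_{kl}\neq0}v_{nj}v_{kl}^*\,\delta_{G_{nj},G_{kl}}.
\]

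Finally I would weaken this to the claimed bound by enlarging each summation range from $G_{nj}\neq0$ to simply $n\neq j$ (and likewise for $(k,l)$). The terms thereby added are precisely those with $E_n=E_j$ and $E_k=E_l$ — the ``mixed'' terms, in which exactly one of the two gaps vanishes, carry $\delta_{G_{nj},G_{kl}}=0$ and contribute nothing — and they assemble into $\bigl|\sum_{n\neq j:\,E_n=E_j}v_{nj}\bigr|^2\geq0$, so the right-hand side can only increase. The only genuinely delicate point is this degeneracy bookkeeping: distinguishing $n\neq j$ from $G_{nj}\neq0$, and recognizing that the surplus reassembles into a manifest square; with the basis choice noted at the outset that surplus is actually zero, so the inequality is in fact an equality there and nothing is lost. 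Everything else — the form of $\rho_t$ in the energy basis, the elementary time averages of complex exponentials, and the reality of the trace — is routine.
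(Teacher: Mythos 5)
Your proof is correct, and its core is the same dephasing computation as the paper's: expand $\tr{P(\rho_t-\omega)}$ in an energy eigenbasis and use $\braket{\expo{-i(G_{nj}-G_{kl})t}}=\delta_{G_{nj},G_{kl}}$. The only real difference is how the degenerate off-diagonal terms are handled, which is indeed the one delicate point. The paper exploits purity of $\rho$ to pick a basis in which $\rho$ has support on a single state per degenerate eigenspace, so that $G_{nj}=0$ with $n\neq j$ forces $\rho_{nj}=0$; the sum over $n\neq j$ then coincides with the sum over $G_{nj}\neq 0$ and the stated bound is actually an equality. You instead keep an arbitrary energy basis, derive the exact identity with both sums restricted to nonzero gaps, and note that enlarging the range to $n\neq j$ adds only the cross-terms with both gaps zero, which assemble into $\bigl|\sum_{n\neq j,\,E_n=E_j}v_{nj}\bigr|^2\geq 0$ (the mixed terms vanish because $\delta_{0,G_{kl}}=0$). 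This buys a marginally stronger statement: the inequality holds in every energy basis and does not use purity at all. One caveat worth flagging: the special basis is not merely a convenience for the paper, since the subsequent proof of \cref{le:3} uses $\sum_{n,j}|\rho_{jj}P_{jn}\rho_{nn}P_{nj}|=\tr{P\omega P\omega}$, which requires $\omega$ to be diagonal in the chosen basis; your ``convenient'' choice of diagonalizing $\rho$ within each degenerate eigenspace recovers exactly that, so nothing is lost.
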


\begin{proof}
    \label{pr:2}
    First note that, since $\rho$ is a pure state, even if some energy levels are degenerate, there is always an energy basis we can choose such that $\rho$ only has support on one energy state inside each energy eigenspace.
    In this basis, $G_{nj} = 0$ implies that either $n=j$ or $\rho_{nj} = 0$.

    Then, let us write $\rho(t) = \sum_{nj} \rho_{nj} \expo{-i G_{nj} t} \ket{n}\!\bra{j}$, and note that
    \begin{align}
      \label{ap-eq:36}
      \omega &=  \sum_{n=j} \rho_{nj} \ket{n}\!\bra{j} \notag\\
      \rho(t) - \omega &= \sum_{n \neq j} \rho_{nj} \expo{-i G_{nj} t} \ket{n}\!\bra{j}.
    \end{align}
    Thus, for any projector $P$ we have
    \begin{align}
      \langle& \left| \tr{P (\rho(t)-\omega)} \right|^2 \rangle \notag\\
             &= \Big\langle \Big| \sum_{n \ne j} P_{jn} \rho_{nj} \expo{-i G_{nj} t} \Big|^2 \Big\rangle \\
             &= \sum_{n \neq j} \sum_{k \neq l} \rho_{nj}P_{jn}(\rho_{kl}P_{lk})^*
               \braket{\expo{-i(G_{nj} - G_{kl})t}}, \notag\\
             &= \sum_{n \neq j} \sum_{k \neq l} \rho_{nj}P_{jn}(\rho_{kl}P_{lk})^*
               \delta_{G_{nj},G_{kl}}.\notag
    \end{align}
\end{proof}

\begin{theorem}
    \label{le:3}
    Given a finite-dimensional Hilbert space $\hil$, any initial state $\rho:\hil \rightarrow \hil$ evolving under a Hamiltonian with energy levels $E_n$, and any $N$-outcome measurement $\mcm$, the following equation holds:
    \begin{equation}
        \label{ap-eq:3}
        \big\langle D_{\mcm}(\rho_t,\omega) \big\rangle
        \leq \frac{1}{2} \sqrt{D_G \frac{N - 1}{\de}},
    \end{equation}
    where $D_G$ is the maximum degeneracy of any gap in the spectrum.
\end{theorem}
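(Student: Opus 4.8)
The plan is to deduce the general-state bound from the pure-state estimate of \cref{le:4} by a purification argument, and then to perform the sum over measurement outcomes in a way that uses the completeness relation $\sum_{j}\oj=\id$; it is this last step that produces a better constant than the analogous estimate in the literature.

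First I would purify. Given $\rho$ on $\hil$, introduce an ancilla of the same dimension, a pure state $\Psi$ on the enlarged space whose reduced state is $\rho$, and the Hamiltonian $H\otimes\id$; the extended operators $\oj\otimes\id$ reproduce the same outcome probabilities, $\tr{\oj\rho_t}=\tr{(\oj\otimes\id)\Psi_t}$, so $\braket{D_\mcm(\rho_t,\omega)}$ is unchanged. The point that needs checking --- and the first place things could go wrong --- is that the parameters in the bound survive: since $\tr{(\Pi_m\otimes\id)\Psi}=\tr{\Pi_m\rho}$, the effective dimension $\de$ is unchanged; and although $H\otimes\id$ has massively degenerate eigenspaces, one may choose the energy basis of \cref{le:4} so that $\Psi$ occupies a single eigenstate inside each eigenspace of $H\otimes\id$ (write $\ket\Psi=\sum_m\sqrt{p_m}\ket{e_m}$ with $\ket{e_m}=(\Pi_m\otimes\id)\ket\Psi/\sqrt{p_m}$, $p_m=\tr{\Pi_m\rho}$, and complete each $\ket{e_m}$ to a basis), which forces every nonzero gap appearing in \cref{le:4} to be a gap of the original spectrum, so the relevant maximum gap degeneracy is still $D_G$.

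Next I would apply \cref{le:4} with $P=\oj\otimes\id$ --- its proof never uses $P^{2}=P$, so any $0\le P\le\id$ is admissible --- relabelling the lemma's energy indices so that $j$ stays reserved for outcomes. Grouping the resulting terms by the value of the gap and applying Cauchy--Schwarz (at most $D_G$ index pairs share a gap) gives $\braket{|\pj(\rho_t)-\pj(\omega)|^{2}}\le D_G\sum_{m\neq n}|\Psi_{mn}|^{2}\,|(\oj\otimes\id)_{nm}|^{2}$, where $|\Psi_{mn}|^{2}=p_mp_n$. Summing over the $N$ outcomes, using $p_mp_n\le\tfrac12(p_m^{2}+p_n^{2})$ and symmetrising in $(m,n)$ decouples the populations from the operators; bounding the (occupied) energy sum by the full one leaves $\sum_m p_m^{2}\sum_{j}\bigl[(\oj^{2}\otimes\id)_{mm}-(\oj\otimes\id)_{mm}^{2}\bigr]$. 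Here $\sum_{j}\oj^{2}\le\sum_{j}\oj=\id$ bounds the first piece by $1$, while $\sum_{j}(\oj\otimes\id)_{mm}^{2}\ge\tfrac1N\bigl(\sum_{j}(\oj\otimes\id)_{mm}\bigr)^{2}=\tfrac1N$ controls the second, producing the factor $\tfrac{N-1}{N}$; since $\sum_m p_m^{2}=\de^{-1}$, this gives $\sum_{j}\braket{|\pj(\rho_t)-\pj(\omega)|^{2}}\le\tfrac{N-1}{N}\,\tfrac{D_G}{\de}$. I expect this outcome-sum manipulation to be the crux: the temptation is to estimate each outcome separately, which costs an extra factor of $N$ (the loss relative to this bound), and the work is instead to let $\sum_{j}\oj=\id$ do the bookkeeping.

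Finally I would assemble the distinguishability: $\braket{D_\mcm(\rho_t,\omega)}=\tfrac12\sum_{j}\braket{|\pj(\rho_t)-\pj(\omega)|}\le\tfrac12\sum_{j}\sqrt{\braket{|\pj(\rho_t)-\pj(\omega)|^{2}}}$ by concavity of the square root, and Cauchy--Schwarz over the $N$ outcomes turns the right-hand side into $\tfrac{\sqrt N}{2}\bigl(\sum_{j}\braket{|\pj(\rho_t)-\pj(\omega)|^{2}}\bigr)^{1/2}\le\tfrac{\sqrt N}{2}\sqrt{\tfrac{N-1}{N}\tfrac{D_G}{\de}}=\tfrac12\sqrt{\tfrac{(N-1)D_G}{\de}}$, which is the claimed bound.
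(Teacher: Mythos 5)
Your proposal is correct and follows essentially the same route as the paper's proof in \cref{sec:quantum-th}: reduce to the pure-state bound of \cref{le:4} via purification, bound the gap-degeneracy matrix by $D_G$, exploit the completeness relation $\sum_j \oj = \id$ across the sum over outcomes to obtain the factor $(N-1)/N$, and finish with Cauchy--Schwarz over the $N$ outcomes. The only difference is in the execution of the key improvement step: where you extract $(N-1)/N$ element-wise via $\sum_{n\neq m}|(\oj)_{nm}|^2=(\oj^2)_{mm}-(\oj)_{mm}^2$ together with $\sum_j(\oj)_{mm}^2\geq 1/N$, the paper instead shifts $P\to P-\id/N$ inside $\tr{P\omega P\omega}$ and uses $\sum_{P\in\mcm}\tr{P\omega^2}=\tr{\omega^2}$ --- the two computations are equivalent.
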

\begin{proof}
    \label{pr:7}
    We start by assuming that $\rho$ is a pure state, and then generalize to mixed states.
    This means we can use \cref{le:4}.

    Next, the following steps are easier to follow if we label each energy gap by the indices $\alpha = (n,j)$ and $\beta = (k,l)$, denote summing over $\alpha$ as shorthand for summing over $n\neq j$, and we define the Hermitian matrix $M_{\alpha\beta}= \delta_{G_{nj},G_{kl}}$.

    Using these definitions it is easy to see that the sums in \cref{le:4} form an inner product, $\sum_{\alpha \beta} v_\alpha^* M_{\alpha \beta} v_\beta = |v^\dag M v|$.
    Therefore, we can use the Cauchy-Schwarz inequality to bound this sum by
    \begin{align}
      \label{ap-eq:37}
      \braket{\tr{P(\rho_t-\omega)}^2}
      &\leq \sum_{\alpha \beta} v_\alpha^* M_{\alpha \beta} v_\beta \notag\\
      &\le \|M\| \sum_{\alpha} |v_\alpha|^2 \notag\\
      &= \|M\| \sum_{n \neq j} |\rho_{nj}P_{jn}|^2 \notag\\
      &\le \|M\| \sum_{n, j} |\rho_{nj}P_{jn}|^2 \notag\\
      &\le \|M\| \sum_{n, j} |\rho_{jj}P_{jn} \rho_{nn} P_{nj}| \notag\\
      &= \|M\| \tr{P\omega P\omega},
    \end{align}
    where we also used that $\rho_{jj}\rho_{nn} \geq \rho_{nj}\rho_{jn}$ by positivity of the density matrix.
    Now, note that the left-hand-side doesn't change if you subtract from the projector anything proportional to the identity.
    Therefore, we can write\newpage
    \begin{align}
      \label{eq:7}
      &\braket{\tr{P(\rho_t-\omega)}^2}\frac{1}{\|M\|} \notag\\
      &\quad\le \tr{(P - \frac{\id}{N})\omega (P - \frac{\id}{N})\omega} \notag\\
      &\quad\le \tr{P\omega P\omega} - 2\tr{\frac{\id}{N}\omega P\omega} + \tr{\frac{\id}{N}\omega \frac{\id}{N}\omega} \notag\\
      &\quad= \tr{P\omega P\omega} - \frac{2}{N}\tr{P\omega^2} + \frac{1}{N^2}\tr{\omega^2} \notag\\
      &\quad\leq \tr{P\omega^2}\left(1 - \frac{2}{N}\right) + \frac{1}{N^2}\tr{\omega^2} ,
    \end{align}
    which leads to
    \begin{align}
      \sum_{P \in \mcm}\braket{\tr{P(\rho_t-\omega)}^2}
      &\leq \|M\| \tr{\omega^2}\left( 1 - \frac{2}{N} + \frac{1}{N} \right) \notag\\
      &= \frac{\|M\|}{\de}\frac{N-1}{N},
    \end{align}
    where it was used that $\tr{\omega^2} = {\de}^{-1}$.

    Since $M$ is a block diagonal matrix, where each block is composed of only $1$s and is at most of size $D_G$, then the largest eigenvalue of $M$ is at most $D_G$, and we have $\|M\| \leq D_G$.
    This finally leads to
    \begin{align}
      \label{ap-eq:63}
      \braket{D_\mcm(\rho_t,\omega)}
      &= \frac{1}{2} \sum_{P \in \mcm}\braket{\tr{P(\rho_t-\omega)}} \notag\\
      &\leq \frac{1}{2} \sum_{P\in \mcm} \sqrt{\braket{\tr{P(\rho_t-\omega)}^2}} \notag\\
      &\leq \frac{1}{2} \sqrt{N  \sum_{P\in \mcm} \braket{\tr{P(\rho_t-\omega)}^2}} \notag\\
      &\leq \frac{1}{2} \sqrt{(N - 1) \frac{\|M\|}{\de}} \notag\\
      &\leq \frac{1}{2} \sqrt{D_G \frac{N - 1}{\de}}.
    \end{align}

    \newcommand{\mca}{\mathcal{A}}
    To see that the same will hold for mixed states, take an ancillary Hilbert space $\mca$ with the same dimension as $\hil$ and use it to purify $\rho$.
    That is, define a pure state $\rho'$ on $\hil\otimes\mca$ such that $\tr[_\mca]{\rho'} = \rho$, and define $\omega' = \braket{\rho'_t}$.
    Then it is straightforward to see that
    \begin{align}
      \label{ap-eq:62}
      \braket{D_\mcm(\rho_t,\omega)}
      &= \frac{1}{2} \sum_{P\in \mcm} \braket{\left| \tr[_\hil]{P(\rho_t-\omega)} \right|} \notag\\
      &= \frac{1}{2} \sum_{P\in \mcm} \braket{\left| \tr[_\hil]{P \tr[_\mca]{\rho'_t-\omega'}} \right|} \notag\\
      &= \frac{1}{2} \sum_{P\in \mcm} \braket{\left| \tr{P\otimes\id_\mca ({\rho'_t-\omega'})} \right|} \notag\\
      &= \braket{D_\mcm(\rho'_t,\omega')} \notag\\
      &\leq \frac{1}{2} \sqrt{D_G' \frac{N - 1}{\de(\rho')} },
    \end{align}
    where $D_G'$ is the maximum degeneracy among energy gaps in $\hil\otimes\mca$.

    At last, to reproduce \cref{ap-eq:63}, simply chose a null Hamiltonian for the $\mca$ subspace ($H_\mca = 0$ and $H' = H\otimes\id_\mca$).
    Of course this choice leads to a highly degenerate energy spectrum, but, since $\rho'$ is pure, this doesn't affect any of the quantities by the same argument used at the start of this proof.
    This has the consequence that $D_G' = D_G$, and $\de(\rho') = \de(\rho)$.
\end{proof}
\end{document}